\documentclass[letterpaper, 10 pt, journal, twoside]{IEEEtran}
\pagestyle{empty}

\IEEEoverridecommandlockouts

\usepackage{cite}
\usepackage{amsmath,amssymb,amsfonts}
\usepackage{bm}
\usepackage{algorithmic}
\usepackage{graphicx}
\usepackage{textcomp}
\usepackage{xcolor,color}
\usepackage{amsthm}
\usepackage[ruled,vlined]{algorithm2e}
\usepackage[top=0.833in, left=0.667in, right=0.667in, bottom=0.597in, includefoot]{geometry}
\usepackage{hyperref}
\usepackage{tikz}
\usetikzlibrary{positioning,automata}
    
\makeatletter
\let\MYcaption\@makecaption
\makeatother 

\usepackage[font=footnotesize]{subcaption}

\makeatletter
\let\@makecaption\MYcaption
\makeatother



\newcommand{\until}[2]{\cup_{[#1,#2]}}
\newcommand{\always}[2]{\square_{[#1,#2]}}
\newcommand{\eventually}[2]{\lozenge_{[#1,#2]}}

\newcommand{\x}{\bm{x}}
\newcommand{\z}{\bm{z}}
\newcommand{\A}{\bm{A}}
\renewcommand{\c}{\bm{c}}
\newcommand{\s}{\mathbf{s}}

\newtheorem{definition}{Definition}

\newtheorem{theorem}{Theorem}

\newtheorem{remark}{Remark}
\newtheorem{problem}{Problem}
\newtheorem{assumption}{Assumption}

\newcommand{\hl}[1]{#1}


\begin{document}

\title{A More Scalable Mixed-Integer Encoding for Metric Temporal Logic}

\author{Vince Kurtz and Hai Lin
\thanks{The authors are with the Departments of Electrical Engineering, University of Notre Dame, Notre Dame, IN, 46556 USA. \texttt{\{vkurtz,hlin1\}@nd.edu}}
\thanks{This work was supported by NSF Grants CNS-1830335, IIS-2007949.}
}

\maketitle
\thispagestyle{empty}

\begin{abstract}
    The state-of-the-art in optimal control from timed temporal logic specifications, including Metric Temporal Logic (MTL) and Signal Temporal Logic (STL), is based on Mixed-Integer Convex Programming (MICP). The standard MICP approach is sound and complete, but struggles to scale to long and complex specifications. Drawing on recent advances in trajectory optimization for piecewise-affine systems, we propose a new MICP encoding \hl{for finite transition systems} that significantly improves scalability to long and complex MTL specifications. Rather than seeking to reduce the number of variables in the MICP, we focus instead on designing an encoding with a tight convex relaxation. This leads to a larger optimization problem, but significantly improves branch-and-bound solver performance. In simulation experiments involving a mobile robot in a grid-world, the proposed encoding can reduce computation times by several orders of magnitude. 
\end{abstract}

\section{Introduction and Related Work}\label{sec:intro}

Timed temporal logics like Metric Temporal Logic (MTL) and Signal Temporal Logic (STL) offer a compact means of expressing complex specifications with timing constraints. Efficient methods of enabling a system to satisfy logical specifications are of particular interest in robotics and Cyber-Physical Systems. For example, a mobile robot need to visit several types of waypoints before a deadline (see Fig.~\ref{fig:multitarget}). 

Early work on the synthesis problem focused primarily on automata-based methods \cite{baier2008principles,alur2015principles}, which face severe scalability challenges. A prominent alternative is to encode the problem as a satisfiability problem (SAT/SMT) \cite{cimatti2002nusmv, de2008z3}. These methods tend to scale well \cite{shoukry2016scalable}, but solutions are not  globally optimal, even with convex optimization as a theory solver \cite{shoukry2017smc}. 

In this paper, we focus on the Mixed-Integer Convex Programming (MICP) approach to synthesis \cite{raman2014model,karaman2008vehicle}. In addition to being sound (any solution satisfies the specification) and complete (a solution will be found if one exists), MICP methods are guaranteed to find a globally optimal solution. Furthermore, the MICP paradigm generalizes naturally to systems with high-dimensional dynamics \cite{belta2019formal} and allows for maximizing the STL robustness measure \cite{sadraddini2015robust}. Existing MICP methods tend to be slower than SAT/SMT methods \cite{shoukry2016scalable}, however. 

\begin{figure}
    \centering
    \includegraphics[width=0.8\linewidth]{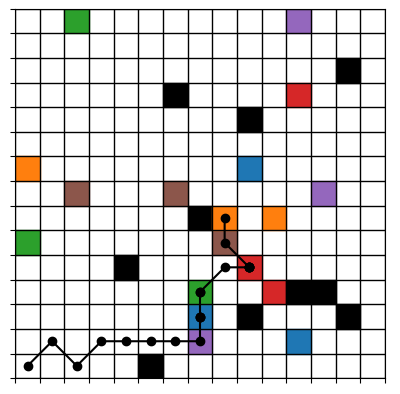}
    \caption{A robot in a grid-world is tasked with visiting six types of waypoints (red, green, blue, yellow, purple and brown squares) while avoiding obstacles (black squares). A standard mixed-integer encoding takes \textbf{1 hour 25 minutes} to find an optimal path, while our proposed approach takes only \textbf{26 seconds}.}
    \label{fig:multitarget}
\end{figure}

The primary drawback of the MICP approach is scalability \cite{belta2019formal}. Standard MICP encodings introduce a new binary variable for each timestep and each sub-formula in the specification. Since the worst-case complexity of MICP is exponential in the number of binary variables, performance rapidly degrades for complex specifications and long time-horizons. 

For this reason, much research has focused on encodings with fewer binary variables and constraints. For example, \cite{sadraddini2018formal} show that half as many constraints can be used if specifications are first re-written in Postive Normal Form (PNF), and \cite{saha2016milp} iteratively solve a sequence of smaller MICPs. 

More recently, there has been a trend toward avoiding integer programming entirely. For certain fragments of STL and MTL \hl{(related logics that exclude certain operators or combinations of operators)}, synthesis can be done with convex programming directly \cite{raman2014model}. Similar fragments have been \hl{used} for synthesis based on Control Barrier Functions (CBFs) \cite{lindemann2018control} and learning \cite{aksaray2016q}. For STL, smooth approximations of the robustness score \cite{pant2017smooth,pant2018fly,mehdipour2019arithmetic,gilpin2021smooth} have been used to find local solutions via gradient descent.

None of the above methods are sound and complete for the full syntax of MTL or STL, however. This limits them to relatively simple scenarios, and they may struggle to find a solution (or be heavily dependent on an initial guess) in the case of more complex specifications like the one in Fig.~\ref{fig:multitarget}. 

In this work, we revisit MICP for timed temporal-logic, focusing in particular on MTL. Taking inspiration from recent work on trajectory-optimization for piecewise-affine systems \cite{marcucci2021shortest}, we note that while the worst-case complexity of MICP depends primarily on the number of binary variables, performance in practice often depends more heavily on the tightness of the convex relaxation \hl{(a convex program where the binary variables in the original MICP are allowed to take continuous values)}, since modern MICP solvers rely heavily on the branch-and-bound algorithm \cite{gurobi}. 

Following \cite{marcucci2021shortest}, the basic idea behind our proposed encoding is to introduce binary variables for each possible transition at every timestep, rather than for each possible state. This results in more binary variables than a standard encoding, but a tighter convex relaxation and better performance in practice \hl{\cite{sherali1990hierarchy}}. This approach holds notable similarities to mixed-integer encodings for various problems in graph theory, including the shortest path problem (SPP) \cite{ahuja1993network} and traveling salesman problem (TSP) \cite{miller1960integer}, which can be viewed as special cases of MTL. 

Our primary contributions are summarized as follows:
\begin{enumerate}
    \item We present a more scalable mixed-integer encoding for \hl{finite transition systems subject to} MTL specifications. 
    \item Our proposed method is sound and complete.
    \item The convex relaxation of our proposed encoding is at least as tight as that of a standard MICP encoding.
    \item The convex fragment induced by our proposed encoding is strictly larger than that of standard MICP encoding. 
    \item In simulation experiments, our proposed approach outperforms standard MICP and SAT-based synthesis. 
\end{enumerate}

The remainder of this paper is organized as follows: background and a formal problem statement is presented in Section~\ref{sec:background}. We summarize the standard MICP approach in Section~\ref{sec:standard} and present our proposed encoding in Section~\ref{sec:main_results}, along with proofs of soundness, completeness, and convex relaxation tightness. We provide simulation examples in Section~\ref{sec:simulation} and conclude with Section~\ref{sec:conclusion}.

\section{Background}\label{sec:background}

\subsection{Metric Temporal Logic}\label{subsec:mtl}

Metric Temporal Logic (MTL) is an extension of Linear Temporal Logic (LTL) which allows for timing-related deadlines \cite{koymans1990specifying}. The syntax of MTL is defined as:
\begin{equation}\label{eq:mtl_syntax}
   \varphi := \top \mid \pi \mid \lnot \varphi \mid \varphi_1 \land \varphi_2 \mid \varphi_1 \until{t_1}{t_2} \varphi_2 
\end{equation}
where $\pi \in AP$ is an atomic proposition, boolean operators ``not'' ($\lnot$) and ``and'' ($\land$) can be used to define disjunction ($\lor$), and the temporal operator ``until'' ($\until{t_1}{t_2}$) can be used to define ``always'' ($\always{t_1}{t_2}$) and ``eventually'' ($\eventually{t_1}{t_2}$). We assume bounded-time specifications, i.e., $t_2$ is finite:

\begin{assumption}[Bounded-time Specification]\label{assumption:bounded_mtl}
    There exists some $0 \leq T < \infty$ such that the satisfiability of $\varphi$ can be uniquely determined in $T$ timesteps.
\end{assumption}

MTL semantics are defined over words $\bm{\sigma} = \sigma_0,\sigma_1,\sigma_2,\dots$, where $\sigma_t \in 2^{AP}$ is the set of atomic propositions that hold at timestep $t$. We denote that $\bm{\sigma}$ satisfies the MTL formula $\varphi$ with $\bm{\sigma} \vDash \varphi$, and that the suffix $\sigma_t,\sigma_{t+1},\dots$ satisfies $\varphi$ with $\bm{\sigma} \vDash_t \varphi$. MTL semantics are defined recursively as follows:
\begin{alignat*}{3}
    & \bm{\sigma} \vDash_t \pi                       && \iff \pi \in \sigma_t \\
    & \bm{\sigma} \vDash_t \lnot \varphi             && \iff \bm{\sigma} \nvDash_t \varphi \\
    & \bm{\sigma} \vDash_t \varphi_1 \land \varphi_2 && \iff \bm{\sigma} \vDash_t \varphi_1 \text{ and } \bm{\sigma} \vDash_t \varphi_2 \\
    & \bm{\sigma} \vDash_t \varphi_1\until{t_1}{t_2}\varphi_2 && \iff \exists t' \in [t+t_1,t+t_2] \text{ s.t. } \bm{\sigma} \vDash_{t'} \varphi_2 \\ 
    & && \qquad\quad \text{ and } \forall t'' \in [t,t'-1], \quad \bm{\sigma} \vDash_{t''} \varphi_1\\
    & \bm{\sigma} \vDash \varphi                   && \iff \bm{\sigma} \vDash_0 \varphi 
\end{alignat*}

\begin{remark}
    MTL is closely related to Signal Temporal Logic (STL), though MTL is defined over discrete atomic propositions, whereas STL is defined over continuous-valued signals. 
\end{remark}

\subsection{Mixed Integer Programming}\label{subsec:mip}

MICP considers problems of the form 
\begin{subequations}\label{eq:general_micp}
\begin{align}
   \min_{\x,\z} ~& J(\x,\z) \\
   \text{s.t. } & \A \begin{bmatrix} \x \\ \z \end{bmatrix} \leq \c
\end{align}
\end{subequations}
where $\x$ is a vector of real-valued decision variables, $\z$ is a vector of binary-valued%
\footnote{Note that in general $\z$ can take integer values. In most applications of MICP, however, only binary variables are considered. For this reason, we will refer to $\z$ as binary variables throughout this paper.}
decision variables, $\A$ and $\c$ are matrices of appropriate dimensions, and $J(\cdot,\cdot)$ is convex. 

If the values of $\z$ are fixed, (\ref{eq:general_micp}) can be solved rapidly with convex programming. But trying every possible $\z$ would be prohibitively expensive. Fortunately, there are several tricks that allow MICP solvers to mostly avoid this worst-case scenario \hl{\cite[Chapter 1.2]{conforti2014integer}}. The most prominent such method is \textit{branch-and-bound}.

The main idea behind branch-and-bound is to ``branch'' on some of the binary variables by fixing their values as 0 or 1. With these variables fixed, we have a smaller MICP with fewer binary variables. We then obtain a ``bound'' by solving a convex relaxation: all of the other binary variables are allowed to take continuous values in $[0,1]$. This convex relaxation can be solved quickly with specialized convex optimization methods, and provides a lower bound on the optimal cost. 

If the convex relaxation is infeasible, we know that the binary values that we fixed are incorrect, allowing us to rapidly eliminate all of the possible solutions with those values. Furthermore, any integer-feasible solution provides an upper bound on the optimal cost. If a given branch has a convex relaxation with a higher cost than this upper bound, we can similarly eliminate solutions in that branch. 

Clearly, the tightness of the convex relaxation has a significant impact on the efficiency of the branch-and-bound algorithm. A tighter convex relaxation will allow the solver to eliminate suboptimal branches rapidly, while a loose convex relaxation is more likely to result in the worst-case scenario of fully exploring every possible branch. With this in mind, we define the \textit{relaxation gap}:

\begin{definition}[Relaxation Gap]
    Given a MICP of the form (\ref{eq:general_micp}), let $J^*$ be the optimal cost and $\tilde{J}^*$ be the optimal cost of the convex relaxation. Then the relaxation gap is given by 
    \begin{equation}\label{eq:relaxation_gap}
        r_{gap} = (J^* - \tilde{J}^*)/J^* .
    \end{equation}
\end{definition}

If the relaxation gap is zero, integer constraints are unnecessary and the problem can solved with convex programming directly. This is the case for a number of interesting problems, including the SPP in graph theory \cite{ahuja1993network} and a convex fragment of timed temporal logic \cite[Theorem 1]{raman2014model}. Even for nonconvex problems (e.g., MTL and the TSP), the relaxation gap plays an important role in determining solver performance \hl{\cite{marcucci2021shortest, sherali1990hierarchy}}.

\subsection{Problem Formulation}\label{subsec:problem_formulation}

In this paper, we consider synthesis over finite-state labeled transition systems:

\begin{definition}[Transition System]
    A transition system 
    \begin{equation*}
        TS = (S, s_0, \to, AP, L, C)
    \end{equation*}
    is a tuple consisting of the following elements:
    \begin{itemize}
        \item $S$ is a finite set of states
        \item $s_0 \in S$ is an initial state
        \item $\to \subseteq S \times S$ are transition relations
        \item $AP$ is a finite set of atomic propositions
        \item $L : S \to 2^{AP}$ is a labeling function
        \item $C : (S \times S) \to \mathbb{R}^+$ is a cost function
    \end{itemize}
\end{definition}

%
%
Note that each transition in $TS$ is associated with a cost. Our goal will be to find a minimum-cost path through $TS$ that satisfies a given MTL specification.

We now provide several definitions to allow for more efficient discussion of transition systems. First, a \textit{path} is merely a sequence of states that obeys the transition relations:
\begin{definition}[Path]
    A sequence of states 
    $
        \s = s_0,s_1,\dots,s_T
    $
    is a path of TS if $(s_t,s_{t+1}) \in \to ~~ \forall t \in [0,\hl{T-1}]$.
\end{definition}
Note that we focus in this paper on finite-length paths. Every path is associated with a total cost:
\begin{definition}[Path Cost]
    Given path $\s$,
    $
        J(\s) = \sum_{t=0}^{T-1} C(s_t,s_{t+1})
    $
    is the total path cost associated with $\s$.
\end{definition}
In addition to the cost of any given path, we are also interested in the corresponding sequence of atomic propositions:
\begin{definition}[Trace]
   The trace of path $\s$ is given by
    $
       \bm{\sigma}(\s) = L(s_0),L(s_1),\dots,L(s_T).
    $
\end{definition}

Note that the trace is a word over which we can consider satisfaction of an MTL formula. With some liberty of notation, we write $\s \vDash \varphi$ if $\bm{\sigma}(\s) \vDash \varphi$. Finally, we define the set of adjacent states as those that can be transitioned to from the current state:
\begin{definition}[Adjacent Set]\label{def:adj}
    Given state $s \in S$, the adjacent set of $s$ is given by
    $
        Adj(s) = \{ s' \in S \mid (s,s') \in \to \}.
    $
\end{definition}

\hl{ Note that if $TS$ contains self-loops, i.e., $(s_i,s_i) \in \rightarrow$, then $s_i \in Adj(s_i)$.} We can now provide a formal problem statement:

\begin{problem}\label{prob:main_problem}
    Given transition system $TS$ and bounded-time MTL specification $\varphi$, find the minimum-cost path through $TS$ that satisfies $\varphi$, i.e.,
    \begin{subequations}\label{eq:main_problem}
    \begin{align}
        \min_{\s} ~& J(\s) \label{eq:main_problem_cost}\\
        \hl{\mathrm{s.t.~}} & \s \vDash \varphi. \label{eq:main_problem_constraint}
    \end{align}
    \end{subequations}
\end{problem}

\section{Standard Mixed-Integer Encoding}\label{sec:standard}

In this section, we present the standard method of encoding (\ref{eq:main_problem}) as an MICP. Our presentation in this section is based primarily on \cite{raman2014model} and \cite{belta2019formal}, which consider STL specifications, but similar encodings for MTL are also popular \cite{karaman2008vehicle,saha2016milp}.

The basic idea is to introduce a binary variable $b_s(t)$ for each state $s$ and each timestep $t$\footnote{\hl{In the STL case, binary variables are introduced for each predicate and subformula, which correspond to states if the system is abstracted as a finite-state transition system.}}. We pose the problem such that $b_s(t) = 1$ means the optimal path visits state $s$ at time $t$. 

We start with the following dynamics constraints:
\hl{
\begin{subequations}\label{eq:transition_system_constraints}
\begin{gather}
    b_{s_0}(0) = 1, \label{eq:standard_initial_constraint} \\
    \sum_{s \in S} b_s(t) = 1 \quad \forall t \in [0,T], \label{eq:standard_occupation_constraint}\\
    \sum_{s' \in Adj(s)} b_{s'}(t+1) \geq b_s(t) \quad \forall s \in S, \label{eq:standard_transition_constraint}
\end{gather}
\end{subequations}}
where (\ref{eq:standard_initial_constraint}) establishes $s_0$, (\ref{eq:standard_occupation_constraint}) ensures that only one state can be occupied at each timestep, and  (\ref{eq:standard_transition_constraint}) enforces transition relations.

To satisfy the specification, we add additional variables and constraints which are defined recursively. The main insight is to define new binary variables, $z^\varphi(t)$, such that $z^\varphi(t) = 1$ only if $\varphi$ is satisfied starting from time $t$. First, note that conjuction and disjunction can encoded as linear constraints as follows:
\begin{gather}
    z \vDash \bigwedge_{i=1}^n z_i \iff z \leq z_i ~\forall i \text{ and } z \geq \hl{1 - n + \sum_{i=1}^n z_i } \\
    z \vDash \bigvee_{i=1}^n z_i \iff z \leq \sum_{i=1}^n z_i \text{ and } z \geq z_i ~\forall i
\end{gather}

%
This allows us to encode satisfaction of $\varphi$ as follows:
\begin{subequations}\label{eq:mtl_constraints}
\begin{alignat}{2}
    & \pi                       && \implies z^\pi = \sum_{s \in S \mid L(s) = \pi} b_s(t), \label{eq:standard_AP_encoding} \\
    & \lnot \varphi             && \implies z^{\lnot \varphi} = 1 - z^{\varphi}, \\
    & \varphi_1 \land \varphi_2 && \implies z^{\varphi_1 \land \varphi_2} = z^{\varphi_1} \land z^{\varphi_2}, \\
    & \varphi_1\until{t_1}{t_2}\varphi_2 && \implies \\
    & &&\bigvee_{t' \in [t+t_1,t+t_2]} \left(z^{\varphi_2}(t') \land \bigwedge_{t'' \in [t,t'-1]} z^{\varphi_1}(t'') \right). \nonumber
\end{alignat}
\end{subequations}


With this in mind, we can write problem (\ref{eq:main_problem}) as follows:
\begin{subequations}\label{eq:standard_encoding}
\begin{align}
    \min_{b_s, z} ~& J(\hl{b_s}) \\
    \text{s.t. } & \text{Transition System Constraints \hl{(\ref{eq:transition_system_constraints})}}, \label{eq:standard_ts_constraints} \\ 
                 & \text{MTL Constraints (\ref{eq:mtl_constraints})}, \\
                 & z^\varphi(0) = 1. \label{eq:standard_satisfaction_constraint}
\end{align}
\end{subequations}

We focus here on finding a minimum-cost path, but a (convex) cost function can also be designed for different purposes, such as maximizing the STL robustness score \cite{sadraddini2015robust}. 

The scalability limitations of this standard MICP formulation are well-known \cite{belta2019formal,mehdipour2019arithmetic,gilpin2021smooth,pant2017smooth,kurtz2020trajectory}. In particular, this encoding is associated with rapidly increasing solve times in the case of long time horizons (which increase the number of binary variables) and complex specifications (which increase the complexity of the constraint structure). 

\section{Main Results}\label{sec:main_results}

In this section, we exploit the fact that is often the tightness of convex relaxation, rather than \hl{the} number of binary variables and constraints, that determines MICP scalability to propose a more efficient MICP encoding. Our main inspiration in this regard is \cite{marcucci2021shortest}, which presents a more efficient MICP for control of PWA systems by increasing the number of binary variables but tightening the convex relaxation.

We begin by constructing a directed graph $G = (V, E)$, associated with $TS$. Each node $i \in V$ corresponds to a state $s \in S$ and a timestep $t$, i.e., $i = (s,t)$. Each edge $(i,j) \in E$ connects two nodes only if there is a corresponding transition: 
\begin{equation*}
    \big( (s,t), (s', t+1) \big) \in E \iff s' \in Adj(s).
\end{equation*}
Additionally, for each node $i \in V$ we define the input set $I_i = \{j \mid (j,i) \in E \}$ and the output set $O_i = \{j \mid (i,j) \in E \}$. 

Our basic idea is to introduce a binary variable for every edge in the graph (the standard encoding (\ref{eq:standard_encoding}) introduces a binary variable for each node). This may seem counterintuitive, as there are many more edges than nodes, but similar formulations perform well for special cases of temporal-logic planning, including SPP \cite{ahuja1993network}, TSP \cite{miller1960integer}, and PWA control \cite{marcucci2021shortest}.

More specifically, we define binary variables $a_{ij}$ for each edge, where $a_{ij}$ will take unit value only if the edge $(i,j)$ is part of the optimal satisfying path. We can then \hl{implicitly define} variables $b_i = b_s(t)$ representing the total flow through each node as follows\hl{, since $i = (s,t)$}:
\begin{equation}\label{eq:our_occupancy_constraints}
    b_s(t) = 
    \begin{cases}
        \sum_{j \in O_i} a_{ij} & \text{if } t = 0 \\
        \sum_{j \in I_i} a_{ji} & \text{otherwise}.
    \end{cases}
\end{equation}
These flow variables take binary values at optimality, \hl{and} can be used to enforce MTL constraints following (\ref{eq:mtl_constraints}).

Our proposed MICP encoding can then be written as:
\begin{subequations}\label{eq:our_encoding}
\begin{align}
    \min_{a_{ij}, b_s, z} ~& J(\hl{a_{ij}}) \\
    \text{s.t. } & \sum_{j \in O_i} a_{ij} - \sum_{j \in I_i} a_{ji} = 
                    \begin{cases}
                        1 & \text{if } i = (s_0, 0) \\
                        0 & \text{if } t < T
                    \end{cases} \label{eq:our_flow_constraints} \\
                 & \text{Occupancy constraints (\ref{eq:our_occupancy_constraints})} \label{eq:our_MICP_occupancy} \\
                 & \text{MTL Constraints (\ref{eq:mtl_constraints})} \\
                 & z^\varphi(0) = 1 \label{eq:our_satisfaction_constraint}
\end{align}
\end{subequations}
where (\ref{eq:our_flow_constraints}) establishes flow constraints and an initial state, and (\ref{eq:our_MICP_occupancy}-\ref{eq:our_satisfaction_constraint}) enforce satisfaction of the MTL formula. 

The proposed encoding is sound and complete:

\begin{theorem}
    Any solution to (\ref{eq:our_encoding}) satisfies the specification $\varphi$ (soundness); and if a satisfying path exists which satisfies $\varphi$, then a solution to (\ref{eq:our_encoding}) exists (completeness).
\end{theorem}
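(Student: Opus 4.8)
The plan is to prove both directions by relating the flow variables $b_s(t)$ induced by (\ref{eq:our_occupancy_constraints}) to the occupancy variables of the standard encoding (\ref{eq:standard_encoding}), which is sound and complete (Section~\ref{sec:standard}). Since the two encodings share the identical MTL constraints (\ref{eq:mtl_constraints}) and the same satisfaction requirement $z^\varphi(0)=1$, and since the edge cost and the path cost agree on any single path, it suffices to show that the integer-feasible $b_s(t)$ of (\ref{eq:our_encoding}) are exactly the integer-feasible $b_s(t)$ of (\ref{eq:standard_encoding}); soundness and completeness then transfer directly.

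For soundness, I would take any binary solution $\{a_{ij}\}$ of (\ref{eq:our_encoding}) and show that the induced $b_s(t)$ satisfy the transition-system constraints (\ref{eq:transition_system_constraints}). The key observation is that $G$ is layered (every edge goes from time $t$ to $t+1$), so summing the flow balance (\ref{eq:our_flow_constraints}) over each time layer telescopes: letting $\Phi_t = \sum_{(s,s')} a_{(s,t)(s',t+1)}$ denote the total flow crossing from layer $t$ to $t+1$, the balance forces $\Phi_0 = 1$ and $\Phi_t = \Phi_{t-1}$, hence $\Phi_t = 1$ for all $t$. This immediately yields the occupancy constraint (\ref{eq:standard_occupation_constraint}), $\sum_s b_s(t) = 1$, while (\ref{eq:standard_initial_constraint}) follows from the source balance and (\ref{eq:standard_transition_constraint}) follows from flow conservation, since the out-flow $\sum_{s' \in Adj(s)} a_{(s,t)(s',t+1)}$ of a node equals its in-flow $b_s(t)$ and each active out-edge contributes to $b_{s'}(t+1)$. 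Together with integrality of the $a_{ij}$, a short induction on $t$ then shows the active edges form a single path, so the $b_s(t)$ are valid path indicators and the MTL constraints certify $\s \vDash \varphi$ exactly as in the standard encoding.

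For completeness, I would go the other way: given a path $\s = s_0, \dots, s_T$ with $\s \vDash \varphi$, set $a_{ij} = 1$ on the edges $((s_t,t),(s_{t+1},t+1))$ traversed by $\s$ and $a_{ij}=0$ elsewhere; these edges exist in $E$ because $(s_t, s_{t+1}) \in \to$. Checking (\ref{eq:our_flow_constraints}) is then routine --- the source has unit out-flow and zero in-flow, every interior path node has balanced unit in- and out-flow, and off-path nodes carry no flow --- and (\ref{eq:our_occupancy_constraints}) recovers $b_{s_t}(t)=1$. Since these $b_s(t)$ coincide with the path indicators of the standard encoding and $\s \vDash \varphi$, the assignment of the MTL variables $z$ guaranteed by completeness of (\ref{eq:standard_encoding}), with $z^\varphi(0)=1$, completes a feasible point of (\ref{eq:our_encoding}).

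The main obstacle is the flow-decomposition step inside soundness: ruling out exotic binary flows that split and recombine across layers, and confirming that a unit integral flow on the layered graph $G$ must be supported on exactly one node per timestep, so that $b_s(t)$ is genuinely a $\{0,1\}$ path indicator rather than merely satisfying $\sum_s b_s(t)=1$. I expect the layered structure plus induction to dispatch this cleanly, but it is the one place where integrality is essential --- the telescoping argument for (\ref{eq:standard_occupation_constraint}) holds even for the relaxation, which is precisely why the relaxation is tight, whereas concluding a single path requires the binary constraint on $a_{ij}$.
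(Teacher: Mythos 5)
Your proposal is correct, and it is considerably more self-contained than the proof in the paper. The paper disposes of this theorem in one line: it asserts that soundness and completeness follow from the inclusion of constraints (\ref{eq:our_MICP_occupancy}--\ref{eq:our_satisfaction_constraint}) together with the known correctness of the standard MTL/STL encoding (\cite[Theorem 1]{belta2019formal}). Notably, the paper's citation covers only the logical part --- that $z^\varphi(0)=1$ with constraints (\ref{eq:mtl_constraints}) certifies $\bm{\sigma}(\s)\vDash\varphi$ given correct occupancy indicators --- and says nothing about why the flow constraints (\ref{eq:our_flow_constraints}) produce occupancy indicators of an actual path of $TS$. That is exactly the piece you supply: the telescoping argument showing $\sum_s b_s(t)=1$ at every layer, the derivation of (\ref{eq:standard_initial_constraint}) and (\ref{eq:standard_transition_constraint}) from flow conservation, and the integrality argument ruling out split-and-recombine flows so that the support of $\{a_{ij}\}$ is a single path. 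Your observation that the occupancy identity survives the relaxation while the single-path conclusion does not is also the crux of the paper's subsequent relaxation-gap theorem, so you have correctly isolated where integrality is doing work. In short, both arguments reduce to the correctness of the standard MTL constraint encoding, but yours proves the flow-to-path correspondence that the paper leaves implicit; the paper's version buys brevity at the cost of leaving the reader to verify that (\ref{eq:our_flow_constraints}) and (\ref{eq:our_occupancy_constraints}) are a faithful substitute for (\ref{eq:transition_system_constraints}).
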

\begin{proof}
    The theorem follows from the inclusion of constraints (\ref{eq:our_MICP_occupancy}-\ref{eq:our_satisfaction_constraint}) and \cite[Theorem 1]{belta2019formal}.
\end{proof}

Furthermore, the relaxation gap of our proposed encoding is no greater than that of the standard encoding for all MTL specifications:

\begin{theorem}
    Let $r_{gap}^*$ be the relaxation gap associated with (\ref{eq:our_encoding}) and $r_{gap}$ be the relaxation gap associated with (\ref{eq:standard_encoding}). Then
    \begin{equation*}
        r_{gap}^* \leq r_{gap}.
    \end{equation*}
\end{theorem}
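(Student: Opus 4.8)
The plan is to reduce the gap inequality to a comparison of relaxation optima and then to establish that comparison by a polytope-projection argument. First I would observe that both (\ref{eq:standard_encoding}) and (\ref{eq:our_encoding}) are sound and complete encodings of the same Problem~\ref{prob:main_problem}, so they share the same integer optimum $J^*$; since $C$ maps into $\mathbb{R}^+$ we may assume $J^* > 0$. Writing $\tilde{J}^*_{\mathrm{std}}$ and $\tilde{J}^*_{\mathrm{our}}$ for the two relaxation optima, definition (\ref{eq:relaxation_gap}) gives $r_{gap} = (J^* - \tilde{J}^*_{\mathrm{std}})/J^*$ and $r_{gap}^* = (J^* - \tilde{J}^*_{\mathrm{our}})/J^*$, so the claim $r_{gap}^* \le r_{gap}$ is equivalent to $\tilde{J}^*_{\mathrm{our}} \ge \tilde{J}^*_{\mathrm{std}}$. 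In words, it suffices to show that the proposed relaxation yields a lower bound at least as large as the standard one.

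Next I would prove this by showing that every point feasible for the proposed relaxation projects, via the occupancy relations (\ref{eq:our_occupancy_constraints}), onto a point feasible for the standard relaxation that achieves the same objective value. Both relaxations share the identical MTL constraints (\ref{eq:mtl_constraints}), the satisfaction condition $z^\varphi(0)=1$, and a cost that reduces to the path cost on the $(b,z)$ variables, so the only thing to check is that the flow constraints (\ref{eq:our_flow_constraints}) together with (\ref{eq:our_occupancy_constraints}) and $a_{ij} \in [0,1]$ imply the transition-system constraints (\ref{eq:transition_system_constraints}). I would verify the three constraints in turn: (\ref{eq:standard_initial_constraint}) follows because the source node $(s_0,0)$ has no incoming edges, so its net-outflow condition forces $b_{s_0}(0)=1$; (\ref{eq:standard_occupation_constraint}) follows by an induction on $t$ showing that the total flow crossing from layer $t-1$ to layer $t$ is conserved and equals one at every layer; and (\ref{eq:standard_transition_constraint}) follows because the outflow of node $(s,t)$ equals $\sum_{s' \in Adj(s)} a_{(s,t),(s',t+1)}$, which is a partial sum of the inflows defining $\sum_{s' \in Adj(s)} b_{s'}(t+1)$, where nonnegativity of the $a_{ij}$ gives the inequality and flow conservation identifies the outflow of $(s,t)$ with $b_s(t)$.

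Having established containment of the projected feasible set together with objective preservation, I would conclude that the set of objective values attainable in the proposed relaxation is a subset of those attainable in the standard relaxation, so minimizing over the smaller set gives $\tilde{J}^*_{\mathrm{our}} \ge \tilde{J}^*_{\mathrm{std}}$, which is exactly the desired inequality.

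The step I expect to be the main obstacle is the transition constraint (\ref{eq:standard_transition_constraint}): unlike the equality constraints it is an inequality, and deriving it requires carefully combining flow conservation at node $(s,t)$ (to replace its inflow by its outflow when $0 < t < T$, and directly for $t=0$) with the nonnegativity of the relaxed edge variables to bound the outflow of $(s,t)$ above by the summed inflows of its successors. A secondary point that needs care is confirming that the cost functions in the two encodings genuinely agree on the shared $(b,z)$ variables, so that objective preservation under the projection holds and the reduction to $\tilde{J}^*_{\mathrm{our}} \ge \tilde{J}^*_{\mathrm{std}}$ is legitimate.
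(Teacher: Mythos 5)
Your proposal is correct and follows essentially the same route as the paper: both arguments hinge on the observation that any feasible point of the relaxed proposed encoding maps (via the occupancy relations) to a feasible point of the relaxed standard encoding with the same cost, so the proposed relaxation's optimum is at least as large and its gap no larger. The only difference is one of detail — the paper simply asserts that the relaxed flow constraints imply the relaxed transition-system constraints, whereas you explicitly verify (\ref{eq:standard_initial_constraint})--(\ref{eq:standard_transition_constraint}) from flow conservation and nonnegativity, which is a worthwhile elaboration but not a different proof.
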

\begin{proof}
    First, note that the optimal (non-relaxed) cost for both (\ref{eq:standard_encoding}) and (\ref{eq:our_encoding}) are the same. 
    
    Next, note that any solution of a convex relaxation of (\ref{eq:our_encoding}) is also a valid solution to a convex relaxation of (\ref{eq:standard_encoding}). This is because the (non-binary) flow constraints (\ref{eq:our_flow_constraints}) are sufficient for enforcing the (non-binary) transition constraints (\ref{eq:standard_initial_constraint}-\ref{eq:standard_transition_constraint}). 
    
    Now assume that we have $r_{gap}^* > r_{gap}$. That would mean that the optimal cost associated with a convex relaxation of (\ref{eq:our_encoding}) is less than the optimal cost associated with a convex relaxation of (\ref{eq:standard_encoding}). But this is a contradiction, since the (relaxed) solution to (\ref{eq:our_encoding}) is also a solution to (\ref{eq:standard_encoding}). 
    
    Furthermore, the converse is not always true. Specifically, the transition constraint (\ref{eq:standard_transition_constraint}) allows some ``flow'' to pass between non-adjacent nodes in a convex relaxation. Such non-adjacent flows are not allowed in (\ref{eq:our_encoding}). This means that it is often the case that $r^*_{gap} \ll r_{gap}$ (see Section \ref{sec:simulation}). 
\end{proof}

\begin{remark}
    \hl{While the worst-case complexity of (\ref{eq:our_encoding}) is higher than that of (\ref{eq:standard_encoding}), as there are more binary variables, the relative tightness of the convex relaxation leads to better scalability in practice, as shown in Section \ref{sec:simulation}.}
\end{remark}

Finally, we show that the relaxation gap is zero for a surprisingly large fragment of MTL:

\begin{theorem}
   For MTL specifications $\varphi$ belonging to the fragment
   \begin{equation}\label{eq:convex_fragment}
   \begin{gathered}
        \psi := \pi \mid \psi_1 \land \psi_2 \mid \psi_1 \lor \psi_2 \\
        \varphi := \always{t_1}{t_2}\psi \mid \eventually{t_2}{t_2} \psi \mid \psi_1 \until{t_2}{t_2} \psi_2 \mid \varphi_1 \land \varphi_2
   \end{gathered}
   \end{equation}
   the relaxation gap associated with (\ref{eq:our_encoding}) is zero. 
\end{theorem}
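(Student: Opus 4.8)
The plan is to show that, for any $\varphi$ in the fragment (\ref{eq:convex_fragment}), problem (\ref{eq:our_encoding}) collapses to a pure minimum-cost network-flow problem whose constraint matrix is totally unimodular; zero relaxation gap then follows from the integrality of the flow polytope.

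First I would argue that imposing $z^\varphi(0) = 1$ for a fragment formula reduces to a finite list of \emph{node-occupancy} requirements of the form ``the propositional formula $\psi$ holds at timestep $t$''. The point is that the fragment admits no genuine temporal branching: top-level conjunction $\varphi_1 \land \varphi_2$ simply unions the requirements of its conjuncts; $\always{t_1}{t_2}\psi$ is a conjunction of the single-timestep requirements over $t \in [t_1,t_2]$; and because $\eventually{t_2}{t_2}$ and $\until{t_2}{t_2}$ use singleton intervals, the disjunction $\bigvee_{t'}$ in (\ref{eq:mtl_constraints}) ranges over one index and degenerates to a conjunction as well. Tracing the AND-encodings from $z^\varphi(0)=1$ downward therefore forces $z^\psi(t)=1$ for a fixed, precomputable set of pairs $(\psi,t)$, with no binary choice left undetermined.

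Next I would translate each requirement ``$\psi$ holds at $t$'' into a linear constraint on the flow variables. Since $\psi$ is non-temporal and each label $L(s)$ is fixed, the satisfying set $S_\psi = \{ s \in S : L(s) \vDash \psi \}$ is determined a priori, and the requirement is equivalent to the single aggregate equality $\sum_{s \in S_\psi} b_s(t) = 1$, generalizing the atomic encoding (\ref{eq:standard_AP_encoding}). Because the flow constraints (\ref{eq:our_flow_constraints}) inject a single unit at the source in a layered graph, they imply $\sum_{s \in S} b_s(t) = 1$ at every layer; combined with $b_s(t) \ge 0$, the aggregate equality forces $b_s(t) = 0$ for every $s \notin S_\psi$. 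Each fragment requirement thus amounts to deleting the forbidden nodes $(s,t)$ from $G$, and (\ref{eq:our_encoding}) becomes a minimum-cost flow problem on the surviving subgraph $G'$.

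Finally I would invoke total unimodularity. The remaining constraints are exactly flow conservation (\ref{eq:our_flow_constraints}) on $G'$, whose coefficient matrix is the node--arc incidence matrix of a directed graph and hence totally unimodular \cite{ahuja1993network}, while the right-hand side of (\ref{eq:our_flow_constraints}) is integral. Every vertex of the relaxed feasible polytope is therefore integral, so the LP relaxation attains its optimum at an integral flow, giving $\tilde{J}^* = J^*$ and $r_{gap} = 0$. I expect the main obstacle to be the reduction in the third paragraph: one must verify that the recursive $\land/\lor$ encoding of (\ref{eq:mtl_constraints}), when read over the relaxation rather than over binaries, genuinely collapses to the aggregate node-occupancy equalities. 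The disjunctive case $\psi_1 \lor \psi_2$ is the delicate one, since the relaxed OR-encoding alone need not pin the forbidden $b_s(t)$ to zero when a state satisfies several disjuncts; the argument must lean on the unit-flow identity $\sum_{s} b_s(t) = 1$ and the precomputability of $S_\psi$ to replace the recursive encoding with the exact aggregate constraint. Once the problem is reduced to flow on $G'$, the integrality step is routine.
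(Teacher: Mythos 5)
Your route is genuinely different from the paper's. The paper handles the purely conjunctive case by citing \cite[Theorem 1]{raman2014model} and then argues that any fractional optimum of the relaxation decomposes into several equal-cost satisfying paths, so that committing to any one of them preserves the cost; you instead try to show that the relaxed constraints (\ref{eq:mtl_constraints}) collapse to node deletions on the time-expanded graph, after which integrality follows from total unimodularity of the flow polytope. Your first step (the fragment has no temporal branching, so $z^\varphi(0)=1$ unrolls into a conjunction of requirements ``$\psi$ holds at $t$'') is correct, and your final TU step would be routine \emph{if} the middle step held. It is also fair to say that, where it works, your argument is more rigorous than the paper's, whose key claim at the analogous point is asserted rather than proved.

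The gap is exactly where you suspected, and it is not closed by the unit-flow identity. The relaxed recursive encoding of a propositional $\psi$ gives, by induction, $z^\psi(t) \leq f_\psi(b)$ where $f_\psi$ replaces $\land$ by $\min$, $\lor$ by $+$, and $\pi$ by $\sum_{s \in S_\pi} b_s(t)$; requiring $z^\psi(t)=1$ therefore only enforces $f_\psi(b) \geq 1$, which is strictly weaker than $\sum_{s \in S_\psi} b_s(t) = 1$. Concretely, take $\psi = (\pi_1 \land \pi_2) \lor (\pi_3 \land \pi_4)$ and a layer-$t$ flow that puts mass $1/2$ on a state labeled $\{\pi_1,\pi_3\}$ and mass $1/2$ on a state labeled $\{\pi_2,\pi_4\}$. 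Then $z^{\pi_i}(t) = 1/2$ for all $i$, the AND-encodings permit $z^{\pi_1\land\pi_2}(t) = z^{\pi_3\land\pi_4}(t) = 1/2$, and the OR-encoding permits $z^\psi(t) = 1$ --- yet \emph{no} flow passes through a state satisfying $\psi$, the unit-flow identity $\sum_s b_s(t)=1$ notwithstanding. (Even a flat disjunction $\pi_1 \lor \pi_2$ fails if a single state may carry both labels, since $z^{\pi_1}+z^{\pi_2}$ double-counts it.) Hence the relaxed feasible set is strictly larger than the flow polytope of your pruned graph $G'$, and the total-unimodularity argument does not apply to it; the relaxation could in principle route flow more cheaply through such non-satisfying states, which is precisely what a zero-gap proof must exclude. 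Your reduction is valid for the special case where every $\psi$ is a disjunction of mutually exclusive atomic propositions (which covers the paper's experiments), so you have a rigorous proof of that subcase, but not of the theorem over the full fragment (\ref{eq:convex_fragment}).
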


\begin{proof}
    For formulas over which only conjunction is used, the problem is convex and thus the relaxation gap is zero. This follows from \cite[Theorem 1]{raman2014model}. Note that because the interval associated with the temporal operators ``eventually'' and ``until'' is a single timestep, disjunctions in the MTL encoding (\ref{eq:mtl_constraints}) occur only between state formulas $\psi$. 
    
    With this in mind, consider the case where the formula contains at least one set of disjunctions over state formulas, i.e., $\bigvee_{k} \psi_k$ and there are some flows $a_{ij} \in (0,1)$ that take non-binary values at optimality. We will show that in this case, there is always a binary solution that achieves the same cost. 
    
    Note that for the fragment (\ref{eq:convex_fragment}), any $a_{ij} \in (0,1)$ arise only due to several possible ``paths'' satisfying different possible state formulas $\psi_k$. Furthermore, each of these paths must have equal cost. This is easily established by contradiction: if the paths do not have equal cost, a lower-cost solution can be obtained by following only the lower-cost paths. Therefore following any single path ($a_{ij} \in \{0,1\}$) results in the same cost. Thus the relaxation gap is zero and the theorem holds.
\end{proof}

This means that for specifications in the fragment (\ref{eq:convex_fragment}), the synthesis problem can be solved in polynomial time using linear programming. This fragment is significantly more expressive than the convex fragment associated with the standard MICP encoding (\ref{eq:standard_encoding}), which only considers atomic propositions, conjunctions, and the ``always'' operator \cite{raman2014model}. 


\begin{remark}
    The fragment (\ref{eq:convex_fragment}) is the same as that considered in \cite{kurtz2020trajectory}, which presents a scalable but incomplete synthesis method. In contrast, our proposed encoding is sound and complete for all specifications, including those in this fragment.
\end{remark}

It may be somewhat surprising that disjunctions, which would seem to introduce some sort of inherently combinatorial aspect to the problem, can be included in a convex fragment. This sort of convexity despite the presence of disjunctions is a feature shared with the LP encodings of the SPP, where the solver must choose between several seeming disjointed paths, but a convex formulation is possible. 


\section{Simulation Experiments}\label{sec:simulation}

In this section, we demonstrate the scalability of our proposed encoding (\ref{eq:our_encoding}) on several robot motion planning problems. The transition system $TS$ models a robot in an $N\times N$ grid-world, as shown in Fig. \ref{fig:multitarget}. Each state $s \in S$ corresponds to a grid cell. Transitions to adjacent cells (including diagonals) are associated with cost $1$ while transitions to the same cell have cost $0$. No other transitions are allowed.

All experiments were performed on a laptop (i7 processor, 32GB RAM) using Gurobi \cite{gurobi} (version 9.0.3, default options) as the MICP solver. Drake \cite{drake} python bindings were used to interface with the solver.

In addition to comparing our proposed MICP encoding (\ref{eq:our_encoding}) with the standard MICP encoding (\ref{eq:standard_encoding}), we consider an SAT-based approach in which the constraints (\ref{eq:standard_ts_constraints}-\ref{eq:standard_satisfaction_constraint}) are passed to the z3 SAT solver \cite{de2008z3}. This method returns a non-optimal solution and tends to be faster than the standard MICP. 

We first consider the simple reach-avoid scenario shown in Fig.~\ref{fig:reach_avoid}, where a robot must reach a goal (green) and avoid an obstacle (red). The atomic propositions for this scenario are $AP = \{goal, obstacle\}$ and the specification is given by
\begin{equation*}
    \lnot obstacle \until{0}{T} goal,
\end{equation*}
where we chose $T = 15$. 

\begin{figure}
    \centering
    \begin{subfigure}{0.48\linewidth}
        \centering
        \includegraphics[width=0.8\linewidth]{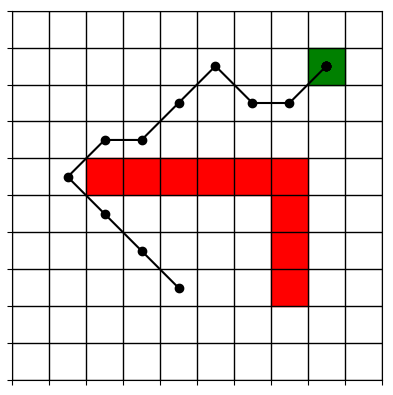}
        \caption{Standard MICP: \textbf{12.7s}. Ours: \textbf{0.47s}.}
        \label{fig:reach_avoid}
    \end{subfigure}
    \begin{subfigure}{0.48\linewidth}
        \centering
        \includegraphics[width=0.8\linewidth]{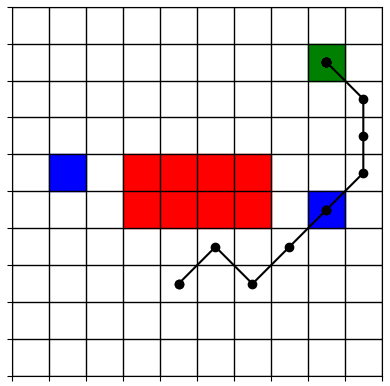}
        \caption{Standard MICP: \textbf{25.9s}. Ours: \textbf{0.61s}.}
        \label{fig:either_or}
    \end{subfigure}
    \caption{Simple robot motion planning specifications in which the robot must navigate through a grid-world and reach a goal (green) and an intermediate target (blue) while avoiding obstacles (red)}
    \label{fig:simple}
\end{figure}

The standard MICP encoding (\ref{eq:standard_encoding}) introduces 1600 binary variables and finds a minimum-cost%
\footnote{Diagonal transitions have the same cost as horizontal/vertical transitions.}
satisfying path in 12.7s. The SAT-based approach is slightly faster, finding a solution in 12.3s. Our proposed encoding (\ref{eq:our_encoding}) introduces 11760 binary variables, but takes only 0.47s to find an optimal solution. The relaxation gap is 0.9999 for the standard approach and 0.94 for our proposed method. This supports the idea that even a modest reduction in the relaxation gap can have a significant impact on MICP performance in practice. 

\begin{remark}
    \hl{Note that this simple reach-avoid specification could also be solved as an SPP with LP. This raises the prospect that there may exist yet stronger MICP encodings that would reduce to LP for specifications like this one.}
\end{remark}

We now consider a standard scenario for synthesis from timed temporal logic: in addition to reaching a green goal and avoiding red obstacles, the robot must also visit one of two blue targets (Fig.~\ref{fig:either_or}). The specification is given by
\begin{equation*}
    \eventually{0}{T} (target\_one \lor target\_two) \land \eventually{0}{T} goal \land \always{0}{T} \lnot obstacle
\end{equation*}
with $T=15$. 

\begin{figure}
\end{figure}

SAT and standard MICP both take 25.9s to find a solution, while our proposed approach takes only 0.61s. The relaxation gap is 0.9996 for standard MICP and 0.8739 for our approach. 

We also consider the simpler multi-target scenario shown in Fig.~\ref{fig:convex_relaxation}, where the specification is given by 
\begin{equation*}
    \eventually{T}{T} (target\_one \lor target\_two),
\end{equation*}
and $T=5$. Note that this specification belongs to the fragment (\ref{eq:convex_fragment}) where our proposed encoding is convex, but not the convex fragment induced by the standard MICP encoding. 

\begin{figure}
    \begin{subfigure}{0.49\linewidth}
        \centering
        \includegraphics[width=0.7\linewidth]{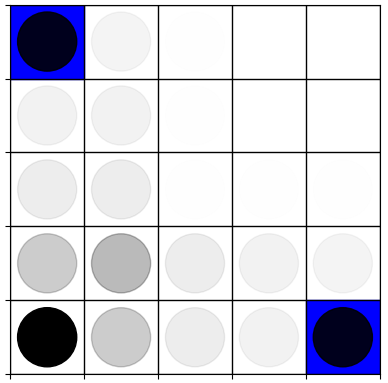}
        \caption{Standard MICP convex relaxation}
        \label{fig:standard_convex_relaxation}
    \end{subfigure}
    \begin{subfigure}{0.48\linewidth}
        \centering
        \includegraphics[width=0.7\linewidth]{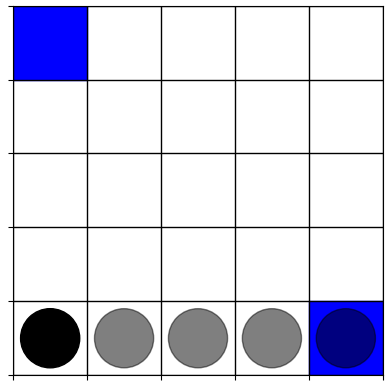}
        \caption{Proposed MICP convex relaxation}
        \label{fig:our_convex_relaxation}
    \end{subfigure}
    \caption{Convex relaxations of the standard encoding (\ref{eq:standard_encoding}) and our proposed encoding (\ref{eq:our_encoding}) for a simple multi-target specification. The relaxation for our approach satisfies the specification directly, since the specification belongs to the convex fragment (\ref{eq:convex_fragment}), but this is not the case for the standard encoding. }
    \label{fig:convex_relaxation}
\end{figure}

Since this problem is relatively small and simple, both MICP methods find the optimal solution rapidly (under 0.01s for both methods). What is more interesting is to consider the solutions to the convex relaxations of each encoding. These are shown in Fig.~\ref{fig:convex_relaxation}, where grid cells are shaded according to the net flow $\sum_{t=0}^T b_s(t)$. The relaxation of our proposed encoding (\ref{fig:our_convex_relaxation}) provides a satisfying solution directly, since $\varphi$ belongs to the fragment \ref{eq:convex_fragment}. The relaxation of the standard encoding (\ref{fig:standard_convex_relaxation}), however, does not respect the transition constraints. Instead, this solution requires the robot to occupy multiple cells at once. 

\hl{The fact that our proposed encoding has a tighter convex relaxation suggests that our approach scale well to long time horizons and complex specifications. We verify this experimentally} by considering a class of more complex, randomly generated scenarios. These scenarios consists of obstacles as well as several groups of targets. The robot is tasked with visiting at least one target in each group while avoiding obstacles. One example of such a specification is shown in Fig.~\ref{fig:multitarget}.

We denote the number of target groups as $N_g$, the number of targets in each group as $N_t$, and the number of obstacles as $N_o$. The specification is given by
\begin{equation}\label{eq:scalable_specification}
    \always{0}{T} \lnot obstacle \land \bigwedge_{k=1}^{N_g} \left( \eventually{0}{T} \bigvee_{l=1}^{N_t} target_k^l \right),
\end{equation}
where $target_k^l$ denotes the $l^{th}$ target in group $k$.

Randomly generating scenarios with a given number of obstacles and targets allows us to test scalability with respect to specification complexity. Specifically, we use the number of target groups, $N_g$, as a proxy for specification complexity. As $N_g$ increases, it becomes more difficult for the robot to find the shortest path that visits each group. 

Specifically, we set up an experiment where scenarios are randomly generated with the following parameters:
\begin{equation*}
    N_t = 2, \quad
    N_o = 2 N_g, \quad
    N = 5 + N_g, \quad
    T = 15.
\end{equation*}
10 trials with the above parameters were considered for several values of $N_g$. We compared the resulting solve times of our proposed MICP encoding with the standard MICP method and SAT-based synthesis. The results are shown in Fig. \ref{fig:scalability_Ng}. The box plot for each trial shows the median (horizontal line), upper and lower quartiles (shaded box) and range (whiskers). 

\begin{figure}
    \centering
    \begin{subfigure}{0.9\linewidth}
        \includegraphics[width=\linewidth]{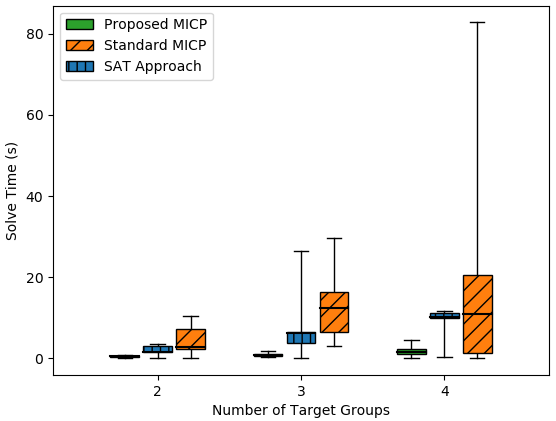}
        \caption{Complexity}
        \label{fig:scalability_Ng}
    \end{subfigure}
    \begin{subfigure}{0.9\linewidth}
        \centering
        \includegraphics[width=\linewidth]{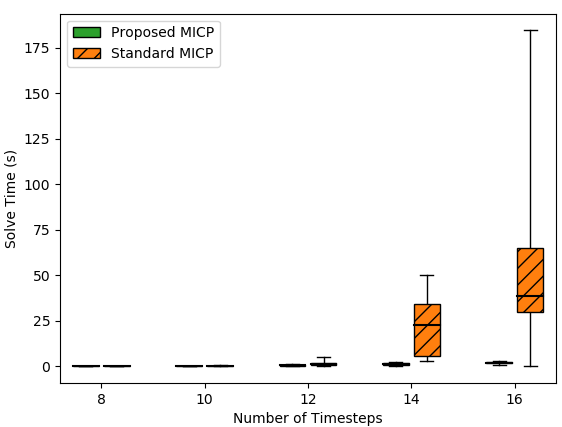}
        \caption{Specification length}
        \label{fig:scalability_T}
    \end{subfigure}
    \caption{Scalability tests with respect to specification complexity and length. Our proposed approach (solid green) consistently outperformed a SAT approach (striped blue) and a standard MICP encoding (striped orange).}
    \label{fig:scalability}
\end{figure}

The standard MICP approach (rightmost bars, diagonally striped orange boxes) performs the worst, with solve times exceeding one minute for the most complex scenarios. The SAT approach (middle bars, blue boxes with vertical stripes) consistently outperforms the standard MICP, which makes sense given the fact that the SAT method finds any feasible solution, while the MICP approach searches for a globally optimal one. Our proposed MICP encoding (left bars, solid greed boxes) outperforms both of the other methods, with all solve times under 5s even for the most complex scenarios. 

It may seem surprising that our proposed approach outperforms the SAT method while also finding a globally optimal solution. We believe that this superior performance is due, again, to the efficiency of branch-and-bound on our proposed encoding. Specifically, the MTL constraints (\ref{eq:our_MICP_occupancy}-\ref{eq:our_satisfaction_constraint}) are all linear in the decision variables, meaning the MTL specification is in some sense always satistisfied (though transition constraints may not be) even for a convex relaxation. This allows the branch-and-bound algorithm to ``hone in'' rapidly on satisfying solutions. SAT solvers, on the other hand, do not have access to this sort of efficient heuristic. 

Finally, we consider scalability with respect to specification length. This is known to be a significant limitation for existing MICP encodings \cite{sadraddini2015robust}, since the number of binary variables increases linearly with the time bound $T$. We consider the same class of randomly-generated scenarios, this time with the following parameters:
\begin{equation*}
    N = 10, \quad
    N_g = 3, \quad
    N_t = 2, \quad
    N_o = 3,
\end{equation*}
and various values of T. The results are shown in Fig.~\ref{fig:scalability_T}. Our approach again consistently outperforms the standard MICP encoding, especially for long time horizons. 

\section{Conclusion}\label{sec:conclusion}

We propose a new MICP encoding for finding an optimal path through a finite-state transition system subject to MTL specifications. By virtue of having a tighter convex relaxation, our proposed approach outperforms existing MICP and SAT-based synthesis methods in terms of speed and scalability to long and complex specifications. Furthermore, this encoding allows specifications within a larger convex fragment to be solved using convex programming directly. Future work will focus on extensions to unbounded specifications, STL (where the advantages of MICP include applicability to systems with high-dimensional PWA physical dynamics), and probabilistic systems. 

\bibliographystyle{IEEEtran}
\bibliography{references}

\end{document}